\g@addto@macro{\thm@space@setup}{\thm@headpunct{:}}
\newcommand{\e}{\mathrm{e}} 
\newcommand{\dd}[1]{\, \mathrm{d} #1}
\newcommand{\ex}[1]{\, \mathrm{exp} \Big(#1 \Big)}
\renewcommand{\i}{{\mathbf{\mathfrak{i}}}}
\newcommand{\EV}[2][\!]{\mathbb{E}^{#1} \left[ #2 \right]} 
\newcommand{\Var}[2][\!]{\; {\mathbb{V}\mathrm{ar}}^{#1} \big[ #2 \big]}
\newcommand{\logn}[1]{\, \ln \left( #1 \right)}
\newcommand{\sign}[1]{\, \mathrm{sgn} \left( #1 \right)}
\newcommand{\T}{^\top} 
\newcommand{\Transp}[1]{\left(#1\right)^{\! \top}} 
\newcommand{\ceil}[1]{\left\lceil#1\right\rceil}
\DeclareMathOperator{\E}{\mathbb{E}}
\DeclareMathOperator{\PM}{\mathbb{P}}
\DeclareMathOperator{\QM}{\mathbb{Q}}
\DeclareMathOperator{\R}{\mathbb{R}}
\DeclareMathOperator{\C}{\mathbb{C}}
\DeclareMathOperator{\F}{\mathcal{F}}
\DeclareMathOperator{\I}{\mathcal{I}}
\DeclareMathOperator{\Cor}{\mathbb{C}\mathrm{or}}
\renewcommand{\T}{\cdot} 
\renewcommand{\Transp}[1]{\big( #1 \big) \cdot} 
\newcommand{\defeq}{\mathrel{\mathop:}=}
\newcolumntype{L}[1]{>{\raggedright\let\newline\\\arraybackslash\hspace{0pt}}m{#1}}
\newcolumntype{C}[1]{>{\centering\let\newline\\\arraybackslash\hspace{0pt}}m{#1}}
\newcolumntype{R}[1]{>{\raggedleft\let\newline\\\arraybackslash\hspace{0pt}}m{#1}}
\newcolumntype{H}{>{\setbox0=\hbox\bgroup}c<{\egroup}@{}} 
\newtheorem{theorem}{Theorem}
\newtheorem{lemma}[theorem]{Lemma}
\theoremstyle{definition}
\newtheorem*{remark}{Remark}
\newcommand{\titleinfo}{The affine inflation market models}
\newcommand{\authorinfo}{Stefan Waldenberger} 
\newcommand{\keywords}{market models, inflation options, affine processes}
\begin{document}
\selectlanguage{english}

\AtEndDocument{\bigskip \bigskip{\footnotesize%
  \textsc{Graz University of Technology, Institute of Statistics, NAWI Graz} \par
  Kopernikusgasse 24/III, 8010 Graz, Austria \par
  \textit{E-mail address: }{stefan.waldenberger@tugraz.at} \par
}}

\begin{center}
\begin{minipage}{.8 \textwidth}
\begin{center}
{\Large \bf \titleinfo} \\ \bigskip
{\large \textsc{\authorinfo}} \\  \smallskip
\end{center}
{\bf Keywords:} \keywords  \bigskip \\
\textsc{Abstract:} Interest rate market models, like the LIBOR market model, have the advantage that the basic model quantities are directly observable in financial markets. Inflation market models extend this approach to inflation markets, where zero-coupon and year-on-year inflation-indexed swaps are the basic observable products. For inflation market models considered so far closed formulas exist for only one type of swap, but not for both. The model in this paper uses affine processes in such a way that prices for both types of swaps can be calculated explicitly. Furthermore call and put options on both types of swap rates can be calculated using one-dimensional Fourier inversion formulas. Using the derived formulas we present an example calibration to market data. 
\end{minipage}
\vspace{.5 cm}
\end{center}





Arbitrage-free inflation models have first been rigorously introduced in \citet{JY03}. Since then several inflation models have been proposed. Similar to interest rate models one can distinguish between short rate models and market models. While short rate models in the spirit of \citet{JY03} aim at modeling the unobservable continuous nominal and real short rate, market models use discrete observable rates as the basis for modeling (see \citet{BB04, ME05}). These observable rates are the basis of liquidly traded inflation swaps, zero coupon inflation-indexed swaps and year-on-year inflation-indexed swaps. While there have been several extensions of these models (e.g. \citet{MM06,MM09}) all of these models suffer the problem that there exist analytical formulas for only one type of swap, but not both. The model in this paper leads to closed formulas for both types.

Based on the ideas in \citet{KPT11} one can use affine processes to describe analytically highly tractable models. Affine processes are Markov process, where the characteristic function is of exponentially affine form, i.e.
$$ \EV{\e^{u X_t} \vert X_s} = \e^{\phi(t-s,u) + \psi(t-s,u) X_s}. $$
The class of affine processes contains a large number of processes, e.g. every Lèvy process is affine. Using the longest-dated nominal zero coupon bond as a numeraire we can model the normalized bond prices as \enquote{exponential martingales} with respect to an affine process $X$. In this type of models we are not only able to price both types of inflation swaps, but can also derive semianalytical formulas for calls and puts on the underlying inflation rates, another liquidly traded inflation derivative.

The structure of this paper is as follows. The first part describes inflation markets and typical traded derivatives. Afterwards we outline the setup of an inflation market model. The second part introduces the affine inflation market model and derives pricing formulas for the introduced inflation derivatives. The third part provides a concrete model specification including the calibration to actual market data. In the appendix we collect the properties of affine processes needed in this paper. Furthermore we specify the affine processes used in the numerical section.

\section{Inflation markets} \label{sec:inflprod}
Denote the time $t$ price of the nominal zero coupon bond with maturity $T$ by $P(t,T)$ and consider an inflation index with time $t$ value $I(t)$. Typically inflation indexes are so-called consumer price indexes (CPI). To shorten notation we will use the term CPI synonymous for inflation index, nevertheless the reader can think of an arbitrary inflation index. The basic mathematical instruments in inflation-linked markets are so called inflation-linked zero coupon bonds (corresponding to zero coupon bonds for nominal interest rate markets).
An inflation-linked zero coupon bond with maturity $T$ is a bond paying $I(T)$ at time $T$. Denote its price by $P_{ILB}(t,T)$.

In actual markets governments issue inflation-linked coupon bonds. Such a bond pays a fixed coupon on the variable basis $I(T_k)/I(T_0)$ at some fixed number of predetermined dates $T_k \leq T$ (typically annually), where $T_0$ is the time of issue. Additionally to the coupons such a bond redeems at maturity $T$ with value $\max\{I(T)/I(T_0),1\}$. Such a bond can therefore be described as a combination of inflation-linked zero coupon bonds plus an included option with payoff $(1-I(T)/I(T_0))_+$. In general these bonds are issued with maturities of several years and inflation is positive. In this case the included option has little influence on the total price, which is why it is market practice to mostly ignore it. In particular if one ignores these options it is possible to strip inflation-linked zero coupon bond prices out of actually traded inflation-linked coupon bonds by the same methods used for nominal quantities.

Consider the quantity 
\begin{equation} 
P_R(t,T) :=  \frac{P_{ILB}(t,T)}{I(t)},\label{eq:realbonddef}
\end{equation}
which is called the price of a real zero-coupon bond. Note that this is not the price\footnote{Here we mean price in terms of money which has to be paid at a transaction. In fact it is essentially this number that is quoted on trading screens. However, in case such a bond is traded, the cash-flow is then the quoted number multiplied by the according index ratio.} of a traded asset, but a theoretical quantity. The usage of the term price is motivated by the fact that this quantity can be viewed as the price of a zero coupon bond in a fictitious economy, where everything is measured in terms of the inflation index $I(t)$\footnote{One could interpret $I(t)$ as a numeraire, but one has to be careful not to use this as a mathematical numeraire, since $I(t)$ is not actually traded.}. Given real zero-coupon bond prices continuously compounded real interest rates are defined by
$R(t,T) := -{ \logn{P_R(t,T)} }/{(T-t)}.$ Accordingly one can define real counterparts to other nominal quantities such as forward interest rates or the short rate. This quantities are sometimes used as a starting point for inflation option pricing models (see e.g. \citet{JY03}, \citet{ME05}).

Next to inflation-linked bond markets there exist several liquidly traded inflation-linked derivatives. First consider the forward price of the inflation index (forward CPI), i.e. the at time $t$ fixed value $\I(t,T)$, which at time $T$ can be exchanged against $I(T)$ without additional costs.  Since $P_{ILB}(t,T)$ is the current price of $I(T)$, 
the time $t$ forward CPI for maturity $T$ is
\begin{equation}
\I(t,T) := \frac{P_{ILB}(t,T)}{P(t,T)} . \label{eq:fwdCPIdef}
\end{equation}
In a zero coupon inflation-indexed swap (ZCIIS) two parties exchange the realized inflation  $\frac{I(T)}{I(t)}$ against a fixed amount $(1+K)^{T-t}$. ZCIIS are mostly traded for full year maturities M. For $T = t + M$ the value of such a payer swap can be expressed as
\begin{equation} \label{eq:ZCIIS} P(t,T) \left( \frac{\I(t,T)}{I(t)} - (1+K)^M \right). \end{equation} The rate $K$, for which equation \eqref{eq:ZCIIS} is zero is then called the ZCIIS rate $ZCIIS(t; M)$. These ZCIIS rates are quoted in the market for several full-year maturities. 
\begin{remark} Note that ZCIIS rates and inflation-linked bonds are closely related via \eqref{eq:fwdCPIdef} and  \eqref{eq:ZCIIS}. In reality this relationship is not observed. This is partly due to different creditworthiness of counterparties in bond and swap markets. A more detailed analysis of this difference can be found in \citet{FL10}. For model calibration one has to choose one market, usually the swap market.
\end{remark}

Next to ZCIIS there is a second important type of swap in inflation markets, the year-on-year inflation-indexed swaps (YYIIS). These swaps exchange the annualized inflation against a fixed rate $K$, i.e. consider an annually spaced tenor structure $T_k=t+k, k=0,\dots,M$.
The netted payment of a payer YYIIS at time $T_k$ is $\left( {I(T_k)}/{I(T_{k-1})} -1 \right)  - K.$
Hence the inflation leg consists of payoffs of the form $$\frac{1}{T-S}  \left( \frac{I(T)}{I(S)} -1 \right).$$
Denote the forward value of such a payoff, the annualized forward inflation rate, by $F_I(t,S,T)$. Then the value of a payer YYIIS with maturity $M$ and strike $K$ can be expressed as
$$\sum_{k=1}^M P(t,T_k) (F_I(t,T_{k-1},T_k) -K).$$
The YYIIS rate $YYIIS(t; M)$ is the rate $K$ such that the corresponding YYIIS has zero value. Note that given YYIIS rates for all annual maturities one can calculate annual forward inflation rates $F_I(t,T_{k-1},T_k)$ and vice versa. 

Foward CPIs $\I(t,T)$, respectively forward inflation rates $F_I(t,S,T)$ are the mathematical quantities underlying the market-traded ZCIIS, respectively YYIIS. Inflation market models aim at modeling these quantities. In existing inflation market models either $\I(t,T)$ or $F_I(t,S,T)$ can be expressed by analytical formulas, but not both. Consider a market, where price processes are assumed to be semimartingales on a filtered probability space $(\Omega, \mathcal{A}, (\mathcal{F}_t), \PM)$. Fix a $T$-forward measure $\QM^T$, i.e. a probability measure equivalent to $\PM$ such that asset prices normalized with the numeraire price $P(t,T)$ are $\QM^T$-martingales. Then
\begin{align*}
\I(t,T) & = \EV[\QM^T]{I(T) \vert \F_t} \\
F_I(t,S,T) & =  \E^{\QM^T} \left[  \frac{1}{T-S}  \left( \frac{I(T)}{I(S)} -1 \right)  \bigg \vert \F_t \right]. \label{eq:fwdinfl} \end{align*}
Calculating the expectations of the inflation index, as well as the fraction of the inflation index at two different times proves difficult. In the model of this paper both are exponentially affine in the underlying driving stochastic process. For an affine process (see appendix) such expectations can be calculated and we are able to give semianalytical formulas for many standard options like caps and floors of forward inflation rates
.

\subsection{The inflation market model}
We now introduce the general setup of an (inflation) market model. 
Consider a tenor structure $0 < T_1 < \dots < T_N =: T$ and a market consisting of zero coupon bonds with maturities $T_k$ and prices $P(t,T_k)$. The price processes $(P(t,T_k))_{0 \leq t \leq T_k}$ are assumed to be positive semimartingales on a filtered probability space $(\Omega, \mathcal{A},  (\mathcal{F}_t)_{0 \leq t \leq T}, \PM)$, which satisfy $P(T_k,T_k) = 1$ almost surely.  If there exists an equivalent probability measure $\QM^{T}$ such that the normalized bond price processes $P(\cdot,T_k) / P(\cdot,T)$ are martingales\footnote{One can extend bond price processes to $[0,T]$ by setting $P(t,T_k) := \frac{P(t,T)}{P(T_k,T)}$ for $t > T_k$, so that $P(\cdot,T_k) / P(\cdot,T)$ is a martingale on $[0,T]$ if and only if it is a martingale on $[0,T_k]$. Economically this can be interpreted as immediately investing the payoff of a zero coupon bond into the longest-running zero coupon bond.
}, the market is arbitrage-free. This setup describes the class of interest rate market models like the classical LIBOR market model (\citet{BGM97}) and its extensions. 

To extend this setup to inflation markets consider an inflation index $I$, where we assume w.l.o.g. that $I(0)=1$. Assume there exist inflation-linked zero-coupon bonds with the same maturities\footnote{The assumption that for each zero coupon maturity there is a ILB with the same maturity is used only for notional convenience. 
} $T_1, \dots, T_N$ and price processes $P_{ILB}(t,T_k)_{0 \leq t \leq T_k}$, all of which are positive semimartingales\footnote{The inflation index is only described through the bond prices $P_{ILB}$. I.e. the distribution of $I(t)$ is only given at times $T_k$, where it coincides with the distribution of $P_{ILB}(T_k,T_k)$.}. If there exists an equivalent probability measure $\QM^T$ such that all normalized price processes
\begin{equation} \label{eq:martingaleassets}
\left( \frac{ P(t,T_k)}{P(t,T)} \right)_{0 \leq t \leq T_k}, \qquad \left(  \frac{ P_{ILB}(t,T_k)}{P(t,T)} \right)_{0 \leq t \leq T_k}
\end{equation}
are $\QM^T$-martingales, the extended market model is arbitrage-free. For given $\QM^T$ define the $T_k$-forward measures $\QM^{T_k}$ by
\begin{equation} \frac{\dd{\QM^{T_k}}}{\dd{\QM^{T}}} =  \frac{1}{P(T_k,T)} \frac{P(0,T)}{P(0,T_k)}.  \label{eq:measurechange}
\end{equation}
Under $\QM^{T_k}$ the forward interest rate
$$F^k(t) := \frac{1}{\Delta_k} \left(\frac{P(t,T_{k-1})}{P(t,T_k)} -1 \right), \qquad \Delta_k := T_k - T_{k-1}, $$
the earlier introduced forward CPI $$\I(t,T_k) = \frac{P_{ILB}(t,T_k)}{ P(t,T_k)},$$ 
and for $j < k$ the forward inflation rates $F_I(t,T_j,T_k)$ given by
$$1 + (T_k-T_j) F_I(t,T_j,T_k) = \EV[\QM^{T_k}]{\frac{I(T_k)}{I(T_j)} \vert \F_t}$$
are all martingales. Modeling (some of) these martingales is the starting point of inflation market models in the literature (see e.g. \citet{ME05}). In contrast we start by modeling the normalized bond prices in \eqref{eq:martingaleassets} and derive the above quantities thereof.

\section{The Affine inflation market model} \label{sec:affineCPI}

Let $(X_t)_{0 \leq t \leq T}$ with $X_0 = x$ be an analytic affine process with state space $\R^m_{\geq 0} \times \R^n$, $m > 0$, $n \geq 0$ on the probability space $(\Omega,\mathcal{A},(\F_t)_{0 \leq t \leq T},\QM^T)$ and define for $k=1,\dots,N$
\begin{equation} \label{eq:affineFwdCPImodel}
\begin{aligned}
 \frac{ P(t,T_k)}{P(t,T)} & := M_t^{u_k}, \qquad && u_k \in (\R_{\geq 0}^m \times \{0\}^n ) \cap \, \mathcal{V},  \\
 \frac{ P_{ILB}(t,T_k)}{P(t,T)} & := M_t^{v_k}, && v_k \in \R^{m+n} \cap \; \mathcal{V},
\end{aligned}
\end{equation}
where
\begin{equation} M_t^u := \EV[\QM^T]{\e^{u \cdot X_T} \vert \F_t} = \ex{\phi_{T-t}(u) + \psi_{T-t}(u) \cdot X_t},  \qquad u \in \mathcal{V}, \label{eq:affinemartingale} \end{equation}
with $\mathcal{V}$ defined in \eqref{eq:momset}. 
The processes $M_t^u$ are $\QM^T$-martingales by the definition of an affine process. Hence this model is arbitrage-free. 
Note that in \eqref{eq:affineFwdCPImodel} the parts of $u_k$ corresponding to real-valued components of $X$ are chosen to be zero. For a decreasing sequence $u_1 \geq \dots \geq u_N \geq 0$ one then has $M_t^{u_{k-1}} \geq M_t^{u_k}$, so that forward interest rates $F^k(t)$ are guaranteed to be nonnegative 
for all $k$. Contrary to interest rates\footnote{Although interest rates are currently negative in certain countries, interest rates are still bounded below by the costs of physically keeping money. We can incorporate bounds different from $0$ by setting $\frac{P(t,T_k)}{P(t,T)} := c_k M_t^{u_k}$. } inflation rates are not required to be nonnegative, which is why we do not restrict $v_k$ in \eqref{eq:affineFwdCPImodel}.
The values of $u_k$ and $v_k$ should be calibrated to fit the initial term structures, i.e. $M_0^{u_k} = P(0,T_k) / P(0,T)$ and $M_0^{v_k} = P_{ILB}(0,T_k) / P(0,T).$ By Lemma \ref{lem:ufitting} it follows that parameters $u_k$ fitting a current term structure with nonnegative forward interest rates can always be chosen to be decreasing. 
For multidimensional affine processes such sequences are far from unique. Concrete specifications how to choose $u_k$ and $v_k$ will be presented in section \ref{sec:numeric}.

The big advantage of this setup is that \eqref{eq:measurechange} in this case reads 
\begin{equation} \frac{\dd{\QM^{T_k}}}{\dd{\QM^{T}}} =  \frac{M_{T_k}^{u_k}}{M_{0}^{u_k}} = \frac{1}{M_{0}^{u_k}} \ex{\phi_{T-T_k}(u_k) + \psi_{T-T_k}(u_k)\T X_{T_k}}
\end{equation}
which is exponentially affine in $X$. In particular it is easy to check (see \citet{KPT11}) that for $0 \leq s \leq r$ and $\psi_{T-r}(u_k) + w \in \mathcal{V}$
\begin{equation}
\begin{aligned}
\EV[{\QM^{T_k}}]{\e^{w \cdot X_r} \vert \F_s}  =&  \ex{ \phi_{r-s}(\psi_{T-r}(u_k) + w) - \phi_{r-s}(\psi_{T-r}(u_k)) } \\
& \ex{\big( \psi_{r-s}(\psi_{T-r}(u_k) + w) - \psi_{r-s}(\psi_{T-r}(u_k)) \big) \cdot  X_s}. \label{eq:measurechar}
\end{aligned}
\end{equation}
Hence the moment generating function of $X$ is also known\footnote{This also shows that $X$ is a time-inhomogeneous affine process under $\QM^{T_k}$.} under different measures $\QM^{T_k}$.
Together with the exponential affine form of basic quantities  this is the reason why this model is analytically highly tractable. For example the forward rates $F^k$ satisfy 
$$(1+\Delta_k F^k(t)) = \frac{M_t^{u_{k-1}}}{M_t^{u_k}} = \e^{A(t,u_{k-1},u_k)+B(t,u_{k-1},u_k) \cdot X_t},$$ with
\begin{equation} \label{eq:AB}
\begin{aligned}
A(t,v,u) & := \phi_{T-t}(v) - \phi_{T-t}(u), \\
B(t,v,u) & := \psi_{T-t}(v) - \psi_{T-t}(u).
\end{aligned}
\end{equation}
Hence the $\QM^{T_k}$-extended moment generating function of $\logn{1+\Delta_k F^k(t)}$ can be calculated explicitly using \eqref{eq:measurechar}. 
The price of a caplet then follows using a Fourier-inversion formula (see \citet{KPT11}). Swaptions can also be dealt with (\citet{KPT11,GPSS14}) and so the most common interest rates derivatives can be calculated efficiently. We can use similar methods for inflation derivatives.

\subsection{Forward CPI and CPI options}
%
As mentioned before, the main advantage of this model is that for several important quantities the moment generating function is known under all forward measures $\QM^{T_k}$. Start by looking at the forward CPI
\begin{equation} \label{eq:fwdCPIaffine}
\I(t,T_k) = \frac{ P_{ILB}(t,T_k)}{P(t,T_k)} =  \frac{ P_{ILB}(t,T_k)}{P(t,T)}  \frac{ P(t,T)}{P(t,T_k)} = \frac{M_t^{v_k}}{M_t^{u_k}} =  \e^{A(t,v_k,u_k) + B(t,v_k,u_k) \cdot X_t},
\end{equation}
with $A$ and $B$ defined in \eqref{eq:AB}. Hence the forward CPI is of exponential affine form and therefore the $\QM^{T_k}$-moment generating function of its logartihm can be calculated using \eqref{eq:measurechar}. In particular, setting $A_I^k := A(T_k,v_k,u_k),  B_I^k := B(T_k,v_k,u_k)$ and using $I(T_k) = \I(T_k, T_k)$ one has
\begin{align*}
\mathcal{M}^{\QM^{T_k}}_{\logn{I(T_k)} \vert \F_s}(z) \defeq & \; \EV[\QM^{T_k}]{I(T_k)^z \vert \F_s} =  \EV[\QM^{T_k}]{\ex{z A_I^k + z B_I^k\T X_{T_k} }\vert \F_s} \\
 = & \ex{z  A_I^k + \phi_{T_k -s}(\psi_{T-T_k}(u_k) + z B_I^k) - \phi_{T_k -s}(\psi_{T-T_k}(u_k)) } \\
& \ex{\big(\psi_{T_k -s}(\psi_{T-T_k}(u_k) + z B_I^k)-\psi_{T_k -s}(\psi_{T-T_k}(u_k)) \big)\T X_s} \\
 = &\ex{z \phi_{T-T_k}(v_k) + (1-z)  \phi_{T-T_k}(u_k)} \\
&   \ex{ \phi_{T_k -s}\big(z \psi_{T-T_k}(v_k) + (1-z) \psi_{T-T_k}(u_k)\big)} \\
& \ex{ \psi_{T_k -s}\big( z \psi_{T-T_k}(v_k) + (1-z) \psi_{T-T_k}(u_k) \big)\T X_s} / M_s^{u_k}.
\end{align*}
Here the last equality follows using \eqref{eq:semiflow}. Note that this function is well defined and analytic in $z$ if $z \psi_{T-T_k}(v_k) + (1-z) \psi_{T-T_k}(u_k) \in  \mathrm{int}(\mathcal{V})$.

Given the moment generating function of $\logn{I(T_k)}$ CPI calls and puts can be calculated using the following well-known Fourier inversion formula (see e.g. \citet{EGP10}). 
If $R \in (1,\infty)$ such that $\mathcal{M}_{X \vert \F}(R) < \infty$, then
\begin{equation}  \label{eq:FourierCall}
\begin{aligned}
\EV{(e^X - K)_+ \vert \F} & = \frac{K}{\pi} \int_0^{\infty} \mathrm{Re} \left(  { \mathcal{M}_{X \vert \F}(\i u + R)} \frac{K^{-(\i u+R)}}{(\i u +R)(\i u +R -1)} 
\right) \dd{u}.
\end{aligned}
\end{equation}
Thus the price of a forward CPI call with maturity $T_k$ and payoff $(I(T_k) - K)_+$ is
\begin{equation*}
\mathrm{CPICall}(t,T_k,K) = \frac{{K} P(t,T_k)}{\pi} \int_0^{\infty} \mathrm{Re} \left(  { \mathcal{M}^{\QM^{T_k}}_{\logn{I(T_k)} \vert \F_s}(\i u + R)} \frac{{K}^{-(\i u+R)}}{(\i u +R)(\i u +R -1)} \right) \dd{u},
\end{equation*}
where $R>1$ is chosen to satisfy $R \psi_{T-T_k}(v_k) + (1-R) \psi_{T-T_k}(u_k) \in \mathrm{int}(\mathcal{V})$.
\begin{remark}
In section \ref{sec:inflprod} it is mentioned that ILBs usually come with an included option guaranteeing a redemption of at least the original nominal amount. For an ILB issued at $S$ and with maturity $T_k$ this translates into an option $(1 - I(T_k)/I(S))_+$ which corresponds to $1/I(S)$ CPI puts with strike $I(S)$. 
\end{remark}
 
\subsection{Forward Inflation and inflation caplets and floorlets}
Typically inflation market models are not able to handle both forward CPI and forward inflation products analytically. With the approach presented here forward inflation rates are of a similar form as forward CPIs. The annualized inflation $F_I(T_k,T_{k-j},T_k)$ satisfies
\begin{align}
1+ (T_k-T_{k-j}) F_I(T_k,T_{k-j},T_k) = \frac{I(T_k)}{I(T_{k-j})} = \e^{A_I^k + B_I^k \cdot X_{T_k} - A_I^{k-j} - B_I^{{k-j}} \cdot X_{T_{k-j}} } =: \e^{Y^k}. \label{eq:Yk}
\end{align}

The following Lemma gives the moment generating functions for random variables of this type. 
\begin{lemma} \label{lem:daffine}
Let $s \leq r \leq t \leq T$ and $$\psi_{T-t}(u_k)+w \in \mathcal{V} \text{ and } \psi_{t-r}(\psi_{T-t}(u_k)+ w) -  \psi_{T-r}(u_k) + u \in \mathcal{V}.$$ Then
\begin{equation}
\begin{aligned}
& \EV[\QM^{T_k}]{\e^{u\T X_r + w\T X_t} \vert \F_s }  =  \ex{\big(\psi_{r-s}(\psi_{t-r}(\psi_{T-t}(u_k)+w) + u) - \psi_{T-s}(u_k) \big)\T X_s} \\
& \qquad  \ex{\phi_{t-r}(\psi_{T-t}(u_k)+w) + \phi_{r-s}(\psi_{t-r}(\psi_{T-t}(u_k)+w)+ u) - \phi_{t-s}(\psi_{T-t}(u_k)) } .
\end{aligned}
\end{equation}
\end{lemma}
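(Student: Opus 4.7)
The plan is to iterate conditioning at time $r$, apply the transform formula \eqref{eq:measurechar} twice, and then collapse the resulting $\phi$- and $\psi$-terms using the semiflow identities $\psi_{a+b}(u) = \psi_a(\psi_b(u))$ and $\phi_{a+b}(u) = \phi_b(u) + \phi_a(\psi_b(u))$ from \eqref{eq:semiflow}.

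First I would use the tower property to write
\begin{align*}
\EV[\QM^{T_k}]{\e^{u\T X_r + w\T X_t} \vert \F_s} = \EV[\QM^{T_k}]{\e^{u\T X_r}\, \EV[\QM^{T_k}]{\e^{w\T X_t} \vert \F_r} \vert \F_s}.
\end{align*}
Since $\psi_{T-t}(u_k)+w \in \mathcal{V}$, \eqref{eq:measurechar} applied with the time points $s \mapsto r$, $r \mapsto t$ evaluates the inner conditional expectation. After simplifying its $X_r$-coefficient via $\psi_{t-r}(\psi_{T-t}(u_k)) = \psi_{T-r}(u_k)$, it takes the form $\e^{C_1}\e^{\tilde w\T X_r}$, where
\begin{align*}
C_1 &= \phi_{t-r}(\psi_{T-t}(u_k)+w) - \phi_{t-r}(\psi_{T-t}(u_k)), \\
\tilde w &= \psi_{t-r}(\psi_{T-t}(u_k)+w) - \psi_{T-r}(u_k).
\end{align*}

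Pulling $\e^{C_1}$ out of the outer expectation, what remains is $\EV[\QM^{T_k}]{\e^{(u+\tilde w)\T X_r} \vert \F_s}$, to which the second hypothesis of the lemma permits a second application of \eqref{eq:measurechar} with $w \mapsto u+\tilde w$. This produces a further exponential correction $\phi_{r-s}(\psi_{T-r}(u_k)+u+\tilde w) - \phi_{r-s}(\psi_{T-r}(u_k))$ and an $X_s$-coefficient $\psi_{r-s}(\psi_{T-r}(u_k)+u+\tilde w) - \psi_{r-s}(\psi_{T-r}(u_k))$. By construction of $\tilde w$ one has $\psi_{T-r}(u_k)+u+\tilde w = \psi_{t-r}(\psi_{T-t}(u_k)+w) + u$, which is exactly the argument of both $\phi_{r-s}$ and $\psi_{r-s}$ appearing in the statement of the lemma.

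It remains to combine constants and clean up the $X_s$-coefficient. The $\phi$-semiflow applied with $a=r-s$, $b=t-r$, $u=\psi_{T-t}(u_k)$ gives $\phi_{t-s}(\psi_{T-t}(u_k)) = \phi_{t-r}(\psi_{T-t}(u_k)) + \phi_{r-s}(\psi_{T-r}(u_k))$, so the two subtracted $\phi$-terms in $C_1$ and in the second correction telescope into the single term $-\phi_{t-s}(\psi_{T-t}(u_k))$ of the claim. Analogously, $\psi_{r-s}(\psi_{T-r}(u_k)) = \psi_{T-s}(u_k)$ supplies the $-\psi_{T-s}(u_k)$ subtracted from the $X_s$-coefficient. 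I expect the main obstacle to be pure bookkeeping: ensuring that all six intermediate $\phi$- and $\psi$-evaluations collapse correctly and that the admissibility arguments land inside $\mathcal{V}$ at each step; the affine structure together with the two semiflow identities does all the substantive work.
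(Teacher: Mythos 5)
Your proposal is correct and follows essentially the same route as the paper's proof: tower property conditioning at time $r$, two applications of \eqref{eq:measurechar}, and the semiflow identities \eqref{eq:semiflow} to telescope $\phi_{t-r}(\psi_{T-t}(u_k)) + \phi_{r-s}(\psi_{T-r}(u_k)) = \phi_{t-s}(\psi_{T-t}(u_k))$ and $\psi_{r-s}(\psi_{T-r}(u_k)) = \psi_{T-s}(u_k)$. The only (immaterial) difference is that you simplify the $X_r$-coefficient via the semiflow one step earlier than the paper does.
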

\begin{proof}
Using the tower property and applying \eqref{eq:measurechar} twice it follows
\begin{align*}
\E^{\QM^{T_k}}  \Big[ & \e^{u\T X_r + w\T X_t} \big \vert \F_s \Big] =  \EV[\QM^{T_k}]{\EV[\QM^{T_k}]{\e^{w\T X_t} \big \vert \F_r}  \e^{u\T X_r} \big \vert \F_s} \\
= & \E^{\QM^{T_k}} \Big[ \ex{ \phi_{t-r}(\psi_{T-t}(u_k)+w) -  \phi_{t-r}(\psi_{T-t}(u_k))} \\
&  \ex{\big( \psi_{t-r}(\psi_{T-t}(u_k)+w) -  \psi_{t-r}(\psi_{T-t}(u_k)) + u \big)\T X_r} \big \vert \F_s \Big] \\
=  & \ex{ \phi_{t-r}(\psi_{T-t}(u_k)+w) -  \phi_{t-r}(\psi_{T-t}(u_k)) } \\
& \ex{\phi_{r-s}(\psi_{t-r}(\psi_{T-t}(u_k)+w) + u) - \phi_{r-s}(\psi_{T-r}(u_k)) } \\
& \ex{\big(\psi_{r-s}(\psi_{t-r}(\psi_{T-t}(u_k)+w) + u) - \psi_{r-s}(\psi_{T-r}(u_k)) \big)\T X_s} \\
= & \ex{ \phi_{t-r}(\psi_{T-t}(u_k)+w) + \phi_{r-s}(\psi_{t-r}(\psi_{T-t}(u_k)+w)+ u) - \phi_{t-s}(\psi_{T-t}(u_k)) } \\
& \ex{\big(\psi_{r-s}(\psi_{t-r}(\psi_{T-t}(u_k)+w) + u) - \psi_{T-s}(u_k) \big)\T X_s}.
\end{align*}
Here we used the semiflow property \eqref{eq:semiflow} to simplify the expression.
\end{proof} 

By Lemma \ref{lem:daffine} the $\QM^{T_k}$-moment generating function of $Y^k$ defined in \eqref{eq:Yk} is
\begin{align*}
\mathcal{M}^{\QM^{T_k}}_{Y^k \vert \F_s}(z) 
 = & \EV[\QM^{T_k}]{ \ex{z A_I^k + z  {B_I^k} \cdot X_{T_k} - z A_I^{k-j} - z  {B_I^{k-j}} \cdot X_{T_{k-j}} } \vert \F_s } \\
= &  \ex{z A_I^k - z A_I^{k-j} +  \phi_{T_k-T_{k-j}}(\psi_{T-T_k}(u_k)+z B_I^k) } \\
& \ex{\phi_{T_{k-j}-s}(\psi_{T_k-T_{k-j}}(\psi_{T-T_k}(u_k)+z B_I^k) - z B_I^{k-j}) - \phi_{T_k-s}(\psi_{T-T_k}(u_k)) } \\
& \ex{\Transp{\psi_{T_{k-j}-s}(\psi_{T_k-T_{k-j}}(\psi_{T-T_k}(u_k)+ z B_I^k) -  z B_I^{k-j}) - \psi_{T-s}(u_k)} X_s},
\end{align*}
which is well-defined if \begin{equation} \left \{ \psi_{T-T_k}(u_k)+z B_I^k, \psi_{T_k-T_{k-j}}(\psi_{T-T_k}(u_k)+z B_I^k) -  \psi_{T-T_{k-j}}(u_k) - z B_I^{k-j} \right \} \subset \mathcal{V}. \end{equation}
The forward inflation rate is then given by $1+(T_k-T_{k-j}) F_I(t,T_{k-j},T_k) = \mathcal{M}^{\QM^{T_k}}_{Y^k \vert \F_t}(1)$.
So for $\psi_{T-T_{k-j}}(v_{k})-B_I^{k-j} \in \mathcal{V}$ it is 
\begin{align*}
1+(T_k-T_{k-j}) & F_I(t,T_{k-j},T_k) =
 \ex{\Transp{\psi_{T_{k-j}-t}(\psi_{T-T_{k-j}}(v_{k})-B_I^{k-j})-\psi_{T-t}(u_k)}X_t} \\
& \ex{\phi_{T-T_{k-j}}(u_{k-j}) + \phi_{T_{k-j}-t}(\psi_{T-T_{k-j}}(v_{k})-B_I^{k-j} )+ \phi_{T-t}(u_k)}.
\end{align*}
Furthermore the payoff of an inflation caplet with strike $K$ is 
\begin{align*}
(T_k - T_{k-j}) (F_I(T_k,T_{k-j},T_k)-K)_+ =  \left( \frac{I(T_k)}{I(T_{k-j})} - \tilde{K} \right)_+
\end{align*}
where $\tilde{K} = 1 + (T_k - T_{k-j}) K$. With the Fourier inversion formula \eqref{eq:FourierCall} one can calculate the price of an inflation caplet. In particular, we have for $R>1$ 
\begin{align*}
\mathrm{InflCpl}(t,T_{k-j},T_k,K) = \frac{\tilde{K} P(t,T_k)}{\pi} \int_0^{\infty} \mathrm{Re} \left(  { \mathcal{M}^{\QM^{T_k}}_{Y^k \vert \F_t}(\i u + R)} \frac{\tilde{K}^{-(\i u+R)}}{(\i u +R)(\i u +R -1)} \right) \dd{u},
\end{align*}
provided that $\psi_{T-T_k}(u_k)+R B_I^k \in \mathrm{int}(\mathcal{V})$ and $\psi_{T_k-T_{k-j}}(\psi_{T-T_k}(u_k)+R B_I^k) -  \psi_{T-T_{k-j}}(u_k) - R B_I^{k-j} \in \mathrm{int}(\mathcal{V})$.

\subsection{Correlation} \label{sec:correlation}
So far we considered the pricing of typical market traded options. Another important aspect is the correlation structure. The relevant quantities
\begin{equation} \label{eq:logquants}
\begin{aligned}
\logn{1 + \Delta_k F_n^k(t)}  & =  \logn{\frac{M_t^{u_{k-1}}}{M_t^{u_k}}} ={A(t,u_{k-1},u_k) + B(t,u_{k-1},u_k) X_t}, \\
\logn{\I^j(t)} & =  \logn{ \frac{M_t^{v_j}}{M_t^{u_j}} } ={A(t,v_j,u_j) + B(t,v_j,u_j) X_t}. 
\\ \logn{1 + \Delta_k F_I(t,T_{k-j},T_k)} & = \text{const} + 
\big(\psi_{T_{k-j}-t}(\psi_{T-T_k}(v_k)  - B_I^{k-j}) - \psi_{T-t}(u_k) \big)\T X_t. 
\end{aligned}
\end{equation}
are all affine transformation of $X_t$ and the correlation for two such terms is 
\begin{align*}
\Cor [A_t + B_t \cdot X_t,\tilde{A}_t + \tilde{B}_t \cdot X_t] & = \frac{\Var{B_t \cdot X_t,\tilde{B}_t \cdot X_t }}{\sqrt{\Var{B_t \cdot X_t}}\sqrt{\Var{\tilde{B}_t \cdot X_t }}}.
\end{align*}
For independent components of $X_t$ this simplifies to\footnote{Up to some technical conditions the variance of a one-dimensional affine process is
$$\Var{X_t^i} =  \left. \frac{\partial^2}{\partial^2 u} \right \vert_{u=0}  (\phi_t^i(u) + \psi_t^i(u) X_0^i) .$$}
\begin{align} \label{eq:corrstructure}
 \frac{\sum_{i=1}^d B_t^i \tilde{B}_t^i \Var{X_t^i}}{\sqrt{\sum_{i=1}^d (B_t^i)^2 \Var{X_t^i}}\sqrt{\sum_{i=1}^d (\tilde{B}_t^i)^2 \Var{X_t^i} }}.
\end{align}
Hence correlations strongly depend on $B(t,u_{k-1},u_k) = \psi_{T-t}(u_{k-1}) - \psi_{T-t}(u_k)$ and $B(t,v_j,u_j) = \psi_{T-t}(v_j) - \psi_{T-t}(u_j)$, respectively the structure of the $v_k$ and $u_k$. The exact correlation depends on the used measure (e.g. $\QM^{T_k}, \PM$), but choosing $v_k$ and $u_k$ cleverly, one can guarantee that the correlation structure, i.e. the correlation signs, stay the same. Concrete specifications for meaningful correlation structures will be given in the next section. Similar observations can also be made for instantaneous correlations of the corresponding quantities in the case of continuous affine processes (see \citet{GPSS14} for the general idea). 

\section{Implementation example} \label{sec:numeric}  \label{sec:calibrationexample}
We design the structure of the affine inflation market model in such a way that the calibration can be separated into the calibration to nominal market data and the calibration to inflation market data afterwards. The method used to calibrate to nominal market data is based on the ideas in \citet{GPSS14}. There they fit a multiple curve affine LIBOR market model by using a common driving process $X^0$ plus additional driving processes $X^1, \dots, X^M$, all of which are independent, affine and nonnegative. 
For calibration they use caplets with full-year maturities and underlying forwards of tenors less than a year. Using one individual driving process for each year one can then use an iterative procedure to calibrate to market data. Their approach can also be used in this setup. 

In particular consider a semiannual tenor structure $T_k = k/2, k=1,\dots,N$, $N$ even, and a driving affine process consisting of $M+1 = N/2 +1$ components $X^0, X^1, \dots ,X^M$, all of which are independent analytic affine processes with functions $\phi^i$ and $\psi^i$, $i=0,\dots,M$. Then by \eqref{eq:affinecombphipsi}
\begin{align*}
\phi_t(u) & = \sum_{i=0}^M \phi_t^i(u^i) \\
\psi_t(u) & = (\psi_t^0(u^0),\psi_t^1(u^1), \dots, \psi_t^M(u^M)),
\end{align*}
where $u^i, i=0, \dots, M$ denotes the corresponding component of $u \in \R^{M+1}$. To describe an affine inflation market model we also have to specify the vectors $u_k$. The structure of the vectors should be so that the following points are satisfied.
\begin{itemize}
\item Forward interest rates are nonnegative. This is the case if $0 \leq u_k \leq u_{k-1}$. 
\item The model matches the initial interest rate term structure. This basically fixes one component of each vector $u_k$.
\item Calibration to market data is possible by using an iterative procedure. 
\item The model has a meaningful correlation structure.
\end{itemize}
\begin{table}
\begin{center}
\begin{tabular}{l | C{2.5em}:C{2.5em}C{2.5em}C{2.5em}C{2.5em}:C{3.4 em}C{3.4 em}C{2.5em}C{2.5em}  }
& $X^0$ & $X^1$ & $X^2$ & \dots & $X^{M}$ & $X^{M+1}$ & $X^{M+2}$ & \dots & $X^{2M}$ \\ \hdashline
$u_1$ & $\tilde{u}_1$ & $\overline{u}_1$ & $\overline{u}_3$ & $\dots$ & $\overline{u}_{N-1}$ & 0 & $0$ & $\cdots$ & $0$ \\
$u_2$ & $\tilde{u}_2$ & $\overline{u}_2$ & $\overline{u}_3$ & $\dots$ & $\overline{u}_{N-1}$ & 0 & $0$ & $\cdots$ & $0$ \\
$u_3$ & $\tilde{u}_3$ & 0 & $\overline{u}_3$ & $\dots$ & $\overline{u}_{N-1}$ & $0$ & 0  & $\cdots$ & $0$ \\
$u_4$ & $\tilde{u}_4$ & 0 & $\overline{u}_4$ & $\dots$ & $\overline{u}_{N-1}$ & $0$ & 0  & $\cdots$ & $0$ \\
\vdots & \vdots & \vdots & \vdots & $\ddots$ & \vdots & \vdots & \vdots & $\ddots$ & \vdots \\
$u_{N-2}$ & $\tilde{u}_{N-2}$ & 0 & 0 & $\dots$ & $\overline{u}_{N-1}$ & $0$ &0 & $\cdots$ & 0 \\
$u_{N-1}$ & $\tilde{u}_{N-1}$ & 0 & 0 & $\dots$ & $\overline{u}_{N-1}$ & $0$ &0 & $\cdots$ & 0 \\
$u_{N}$& $\tilde{u}_{N}$ & 0 & 0 & $\dots$ & $\overline{u}_{N}$ & $0$ &0 & $\cdots$ & 0 \\
\end{tabular}
\end{center}
\captionsetup{singlelinecheck=false, margin = .5 cm}
\caption{Description of the parameter structure $u_k$. Each row corresponds to one vector with the column names denoting the process the position in the vector corresponds to.}
\label{tab:upar}
\end{table}
This can be achieved by choosing the vectors $u_k$ in the following way. They depend on $2N$ real parameters 
$$\tilde{u}_1 \geq \dots \geq \tilde{u}_N \geq 0, \qquad \overline{u}_1 \geq \dots \geq \overline{u}_N \geq 0.$$
For $1 \leq j \leq M$ set (compare table \ref{tab:upar}, ignoring the zero columns in the rightmost side for now)
$$ u_k = \tilde{u}_k \e^0 + \overline{u}_k e^{\ceil{\frac{k}{2}}} + \sum_{l= \ceil{\frac{k}{2}}+1 }^M \overline{u}_{2l-1} e^l$$
where $e^0,e^1,\dots,e^M$ denote the base vectors $(1,0, \dots, 0), \dots, (0,\dots,0,1)$ of $\R^{M+1}$. Note that with this choice the vectors $(u_k)$ are decreasing. Given $\tilde{u}_1,\dots,\tilde{u}_n$ and the processes $X^0,X^1,\dots,X^M$ by Lemma \ref{lem:ufitting} the parameters $\overline{u}_1,\dots,\overline{u}_N$ are determined by fitting the current term structure. I.e. we require that $$\frac{P(0,T_k)}{P(0,T)} =  \EV[\QM^T]{\e^{u_k\T X_T}} = 
\EV[\QM^T]{\e^{\tilde{u}_k X^0_T}} \EV[\QM^T]{\e^{\overline{u}_k X^{\ceil{\frac{k}{2}}}_T}}  \prod_{i= \ceil{\frac{k}{2}}+1}^M \EV[\QM^T]{\e^{\overline{u}_{2l-1} X^l_T}},$$
so that the parameters $\overline{u}_1,\dots,\overline{u}_N$ can be calculated using backwards iteration.
Furthermore the semiannual forward interest rates with full-year maturities $F^{2k}$ only depend on $u_{2k-1},u_{2k}$ and the processes $X^0, X^k, \dots, X^{M}$. Hence if $X^0$ and $\tilde{u}_1, \dots, \tilde{u}_N$ are already specified, one can fit $X^{M}$ to caplets on the forward interest rate $F^{N}$ and then go backwards to iteratively fit the processes $X^k$ to caplets on forward interest rates $F^{2k}$. Hence if $X^0$ and  $\tilde{u}_1, \dots, \tilde{u}_N$ are fixed, all the remaining parameters can be calibrated to the yield curve and caplet prices. As stated in \citet{GPSS14} and confirmed by our numerical tests the concrete choice of $X^0$ and $\tilde{u}_k$ (within a meaningful range) has no qualitative impact on the resulting calibration quality. Henceforth $X^0$ is fixed as a CIR process (specifications of this process can be found in the appendix, see equation \eqref{eq:CIR}). So far we have said nothing about resulting correlations. This is where the choice of the parameters $\tilde{u}_1, \dots, \tilde{u}_n$ comes in.
The relevant functions for correlations of forward interest rates $F^{2k}$ are
\begin{equation*}
B(t,u_{2k-1},u_{2k}) = ( \psi_{T-t}^0(\tilde{u}_{2k-1}) - \psi_{T-t}^0(\tilde{u}_{2k}), 0, \dots, 0, \psi_{T-t}^{k}(\overline{u}_{2k-1})- \psi_{T-t}^{k}(\overline{u}_{2k}), 0, \dots, 0).
\end{equation*}
These functions are \enquote{orthogonal} except for the first component\footnote{Although only stated for even forward rates for notional simplicity this is true for all forward rates.}. Hence by equation \eqref{eq:corrstructure} the correlation structure mainly depends on the sequence $(\tilde{u}_k)$.
Since $X^0$ is nonnegative the function $\psi^0_t(u)$ is increasing in $u$ (see \citet{KPT11}). With $(\tilde{u}_k)$ being a decreasing sequence this results in nonnegative correlations. Setting $\tilde{u}_k = \tilde{u}$ for all $k$ results in zero correlation\footnote{Negative correlations in this setup are only possible if the sequence $(\tilde{u}_k)$ is not decreasing which means that forward interest rates can become negative.}. The magnitude of correlations depends on how much of the variance of forward bond prices is explained by $\tilde{u}_k X^0$. Here we choose the factors $\tilde{u}_k$ so that $\EV[\QM^T]{\e^{2 \tilde{u}_k X_T^c}} = P(0,T_k)/P(0,T)$. The idea behind this choice is that approximately half of the variance should be explained by the common factor. Alternative if one has additional information on correlations (e.g. through market data), this could be incorporated in $\tilde{u}_k$. 

For calibration we used market data from September, 29th 2011. The yield curve is bootstrapped from LIBOR and swap rates and is displayed in figure \ref{fig:yieldcurve}. The used caplet implied volatilities for 6-month forward interest rates with strikes $1\%$ to $6\%$ and maturities from 1 to 10 years are bootstrapped from cap data. The resulting implied volatilities can be found in figure \ref{fig:capletfit}. 
\begin{figure}
\centering
\includegraphics[width=0.8 \textwidth]{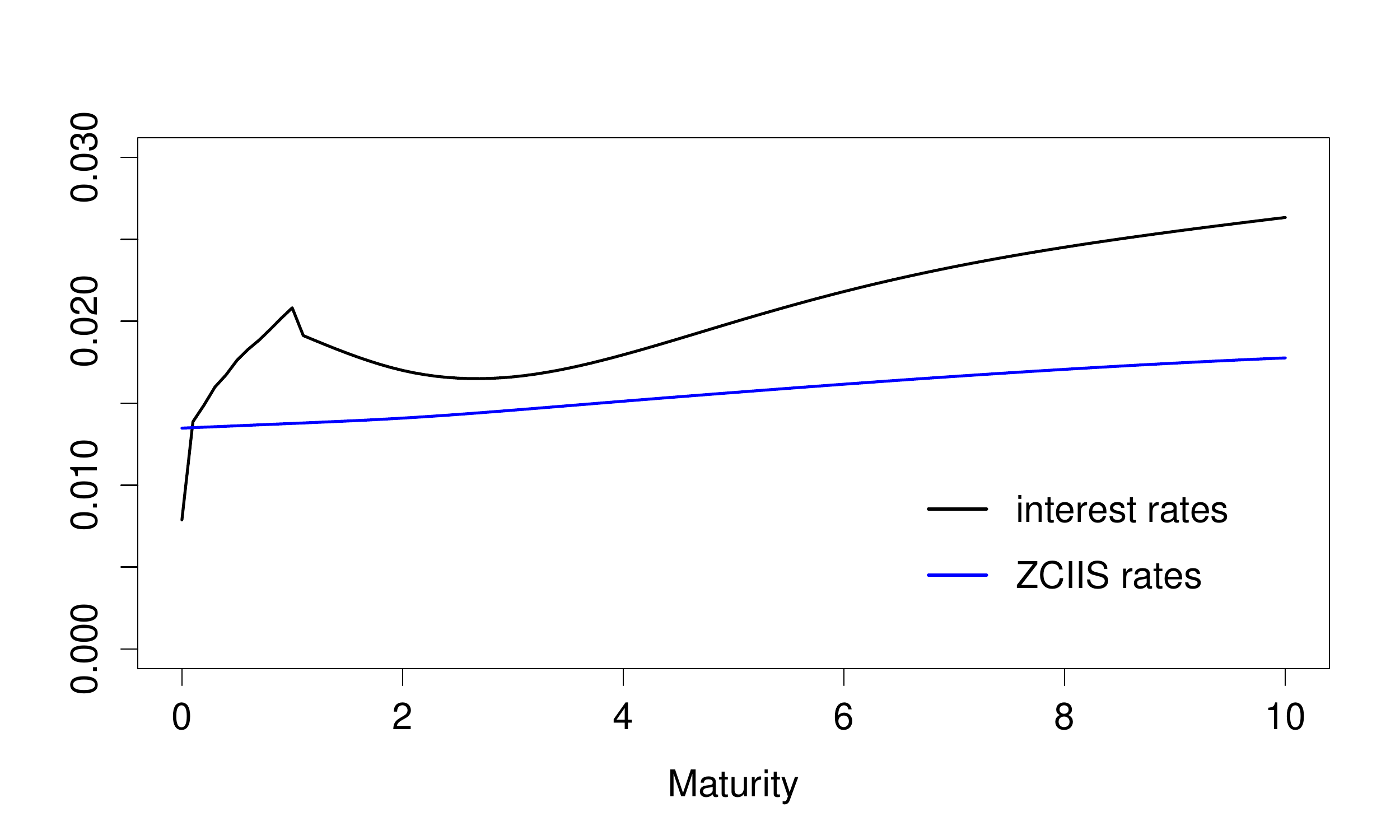}
\caption{Yield curve and ZCIIS curve from September, 29th 2011}
\label{fig:yieldcurve} \label{fig:ZCIIScurve}
\end{figure}
We fixed the time horizon $T=10$ and chose the parameters of the common CIR process as $\lambda=0.026,\theta=0.65,\eta=0.5,x=3.45$. The $M$ individual driving processes are chosen to be CIR processes with added jumps (see appendix, equation \eqref{eq:CIRGamOU}). Their parameters and the parameters $\overline{u}_k$ are calibrated with the mentioned recursive method. In each step the parameters are chosen so that the mean squared errors of implied volatilities are minimized. In contradiction with \citet{GPSS14} we were not able to produce a similar calibration quality as described in their paper\footnote{Several requests for clarification with the authors have resulted in the answer that their results are currently not in a state to be shared.}. The resulting calibration can be found in figure \ref{fig:capletfit}. Especially for long dated caplet volatilities the pronounced skew could not be reproduced. Nevertheless the model provides a reasonable fit of the caplet volatility surface, especially since the focus is on inflation derivatives. 
\begin{figure}
\centering
\includegraphics[width=0.55 \textwidth]{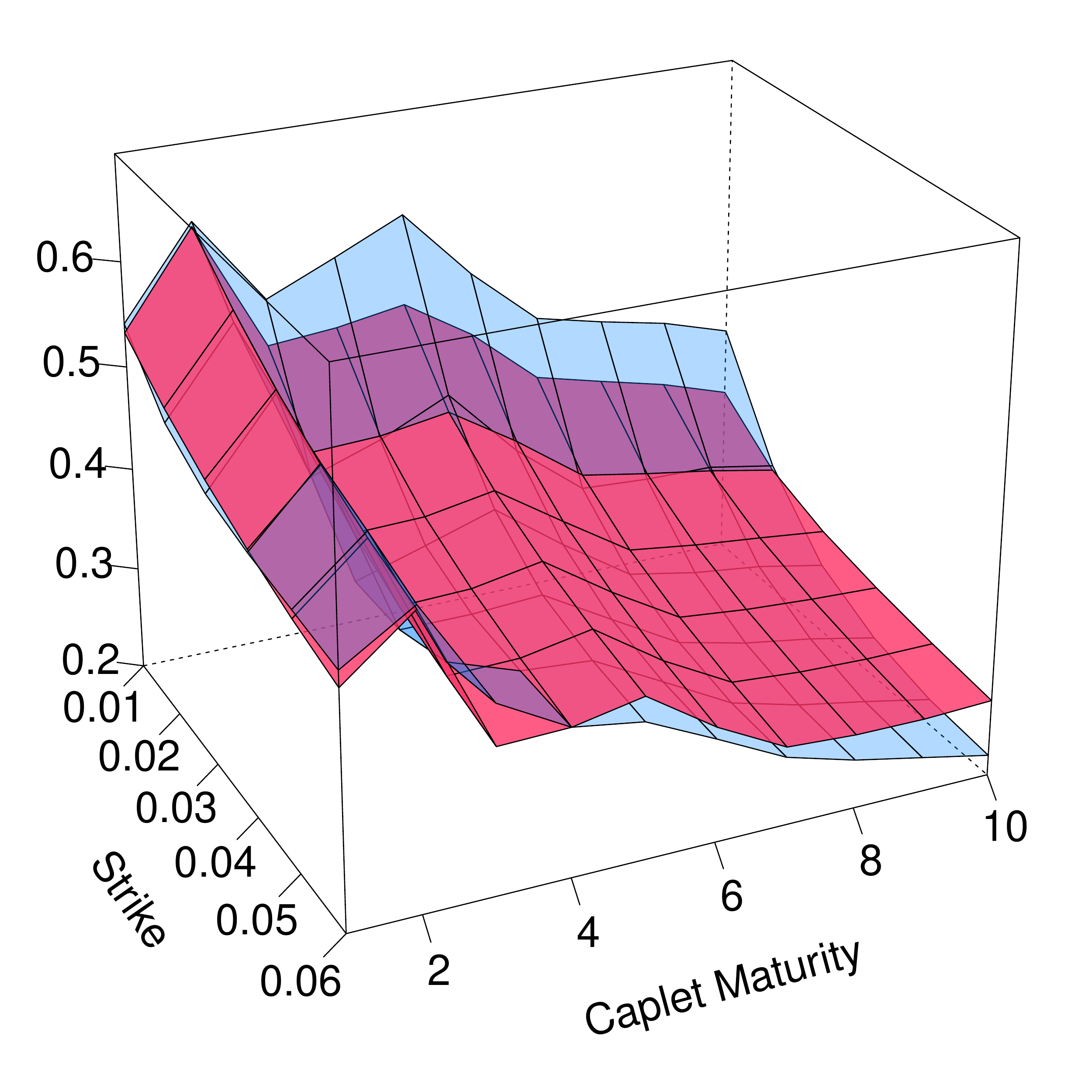}
\captionsetup{singlelinecheck=false, margin = 1.5 cm}
\caption{Market and model caplet implied volatilities of 6-month forward interest rates with strikes $1\%$ to $6\%$ and maturities from 1 to 10 years. Market volatilities are in a transparent blue, while model volatilities are displayed in red.}
\label{fig:capletfit}
\end{figure}

The next step is to extend the calibration to inflation markets. Additionally to the $M + 1$ driving processes used for modeling interest rates we use another $M$ independent analytic affine processes (one for each year) driving inflation-related quantities. Since the individual processes are independent by \eqref{eq:affinecombphipsi} the setup without inflation can be embedded by setting the additional components of $u_k$ equal to zero (see table \ref{tab:upar}). So from now on we assume that the processes $X^0,X^1,\dots, X^M$ and the vectors $u_1, \dots, u_N$ are already fixed.

The choice of the inflation parameters $v_k$ focuses on the same aspects aspects as the choice of $u_k$ without the restriction that inflation rates have to be nonnegative. We again assume that the vectors $v_k$ are determined by $2N$ parameters\footnote{We actually only require $N$ parameters, since the odd rows in table \ref{tab:uinfpar} do not contribute for the considered annual inflation rates. For notional simplicity we nevertheless consider $2N$ parameters.}
$\tilde{v}_1, \dots, \tilde{v}_N$, $\overline{v}_1, \dots, \overline{v}_N.$
In particular, we choose (see also table \ref{tab:uinfpar})
\begin{table}
\begin{center}
\begin{tabular}{l | C{2.5em}:C{2.5em}C{2.5em}C{2.5em}C{2.5em}:C{3.4 em}C{3.4 em}C{2.5em}C{2.5em} }
& $X^0$ & $X^1$ & $X^2$ & \dots & $X^{M}$ & $X^{M+1}$ & $X^{M+2}$ & \dots & $X^{2M}$ \\ \hdashline
\rowcolor{gray}
$v_1$ & $\tilde{v}_1$ & $\overline{u}_1$ & $\overline{u}_3$ & $\dots$ & $\overline{u}_{N-1}$ & $\overline{v}_1$ & $0$ & $\cdots$ & $0$ \\
$v_2$ & $\tilde{v}_2$ & $\overline{u}_2$ & $\overline{u}_3$ & $\dots$ & $\overline{u}_{N-1}$ & $\overline{v}_2$ & $0$ & $\cdots$ & $0$ \\
\rowcolor{gray} $v_3$ & $\tilde{v}_3$ & 0 & $\overline{u}_3$ & $\dots$ & $\overline{u}_{N-1}$ & $0$ & $\overline{v}_3$  & $\cdots$ & $0$ \\
$v_4$ & $\tilde{v}_4$ & 0 & $\overline{u}_4$ & $\dots$ & $\overline{u}_{N-1}$ & $0$ & $\overline{v}_4$  & $\cdots$ & $0$ \\
\vdots & \vdots & \vdots & \vdots & $\ddots$ & \vdots & \vdots & \vdots & $\ddots$ & \vdots \\
$v_{N-2}$ & $\tilde{v}_{N-2}$ & 0 & 0 & $\dots$ & $\overline{u}_{N-1}$ & $0$ &0 & $\cdots$ & 0 \\\rowcolor{gray} $v_{N-1}$ & $\tilde{v}_{N-1}$ & 0 & 0 & $\dots$ & $\overline{u}_{N-1}$ & $0$ &0 & $\cdots$ & $\overline{v}_{N-1}$ \\
$v_{N}$& $\tilde{v}_{N}$ & 0 & 0 & $\dots$ & $\overline{u}_{N}$ & $0$ &0 & $\cdots$ & $\overline{v}_{N}$ \\
\end{tabular}
\end{center}
\captionsetup{singlelinecheck=false, margin = .5 cm}
\caption{Description of the inflation parameter structure $v_k$. Each row corresponds to one vector with the column names denoting the process the position in the vector corresponds to. Note that for annual inflation rate option pricing the vectors $v_j$ with $j$ odd do not matter.}
\label{tab:uinfpar}
\end{table}
$$ v_k = \tilde{v}_k \e^0 + \overline{u}_k e^{\ceil{\frac{k}{2}}} + \sum_{l= \ceil{\frac{k}{2}}+1 }^M \overline{u}_{2l-1} e^l + \overline{v}_k e^{M+\ceil{\frac{k}{2}}}.$$
Choosing the nominal components $v_k^i = u_k^i$ for $i = 1, \dots, M$ has the advantage that forward CPIs and therefore also forward inflation rates do not depend on the nominal processes $X^1, \dots, X^M$. Together with the choice of $v_k$ with respect to the inflation processes $X^{M+1}, \dots X^{2M}$ this implies that the forward CPI $\I(t,T_{2k})$ depends only on $X^0$ and $X^{M+k}$. In particular the function $B(t,v_{2k},u_{2k})$ corresponding to the forward CPI $\I(t,T_{2k})$ defined in \eqref{eq:fwdCPIaffine} is
\begin{align*}
B(t,v_{2k},u_{2k}) & = ( \psi_{T-t}^0(\tilde{v}_{2k}) - \psi_{T-t}^0(\tilde{u}_{2k}), 0, \dots, 0, \psi_{T-t}^{M+k}(\overline{v}_{2k}), 0, \dots, 0).
\end{align*}
From this it follows that for different forward CPIs these functions are \enquote{orthogonal} except for the first component. Furthermore except for the first component they are also \enquote{orthogonal} to the functions $B(t,u_{j-1},u_j)$ relevant for forward interest rates. Hence the correlation structure mainly depends on $\tilde{u}_k$ and $\tilde{v}_k$. Since $\psi_t^0$ is monotonically increasing, one should choose $\tilde{v}_{k} >  \tilde{u}_{k}$ if the corresponding forward CPI should be positively correlated with nominal interest rates or $\tilde{v}_{k} < \tilde{u}_{k}$ if it should be negatively correlated\footnote{We decided to introduce correlation for the inflation part only via the common factor process. We could also have changed the parameters $\overline{u}_k$ in order to introduce correlation. In this case even more complicated correlation patterns could be created.}. Also note that two annual forward CPIs are positively correlated if $\sign{\tilde{v}_k - \tilde{u}_k} = \sign{\tilde{v}_j - \tilde{u}_j}$. This therefore gives us some criteria how to determine the parameters $\tilde{u}_k$ from correlation assumptions and from now on we assume that the parameters $\tilde{v}_1, \dots, \tilde{v}_N$ are given. 
Assuming also a fixed process $X^{M+k}$ we would like to determine $\overline{v}_k$ from the current term structure $ P_{ILB}(0,T_k) / P(0,T)$. We differentiate between two cases. First consider a nonnegative affine process $X^{M+k}$. 
In this case by Lemma \ref{lem:ufitting} there exists a unique $\overline{v}_k$, so that $M_0^{v_k} = P_{ILB}(0,T_k) / P(0,T)$. For $\tilde{v}_k \approx \tilde{u}_k$ this typically results in $\overline{v}_k > 0$. In this case zero coupon inflation for $[t,T_k]$ is always positive. To avoid this alternatively consider an affine process, which takes negative and positive values. In this case $M_0^v$ is not necessarily increasing in  $\overline{v}_k$. However, by Lemma \ref{lem:ufitting} it is still a convex function in $\overline{v}_k$. This means that there are at most two possible choices for $\overline{v}_k$, in which case we need to pick one. For the results in this paper we only used results where one choice was smaller than zero and the other one larger than zero, which we then picked.
To determine the processes $X^{M+1},\dots,X^{2M}$ notice that the annual forward inflation rate $F_I(t,T_{2k-2},T_{2k})$ depends only the processes $X^0, X^{M+k-1},X^k$ and $F_I(t,T_0,T_2)$ depends only on $X^0, X^{M+1}$. By starting with inflation options on $F_I(T_2,T_0,T_2)$ one can calibrate the parameters of $X^{M+1}$. Then one can iteratively calibrate the parameters of $X^{M+k}$ using inflation options on $F_I(T_{2k},T_{2(k-1)},T_{2k})$. 

For the calibration example we used ZCIIS rates from $1$ to $10$ years (see figure \ref{fig:ZCIIScurve}) and inflation options for strikes ranging from $-2\%$ to $6\%$, the prices of which are displayed in figure \ref{fig:inflfit}.
We chose $\tilde{v}_k = \tilde{u}_k (1+c k)$ with $c \approx 0.08$, so that $c k$  is between $1$ and $1.15$. This means that correlation are positive as usually observed. For the processes $X^{M+1}, \dots, X^{2M}$ we used Ornstein-Uhlenbeck processes with added jumps (see appendix, equation \eqref{eq:DGamOUBM}). 
The parameters of these processes and the sequence $(\overline{v}_k)$ are then calibrated as described. In each step the mean squared errors of option prices are minimized. The resulting fit is displayed in figure \ref{fig:inflfit} and this shows that the calibration is very accurate.
\begin{figure}
\centering
\includegraphics[width=0.55 \textwidth]{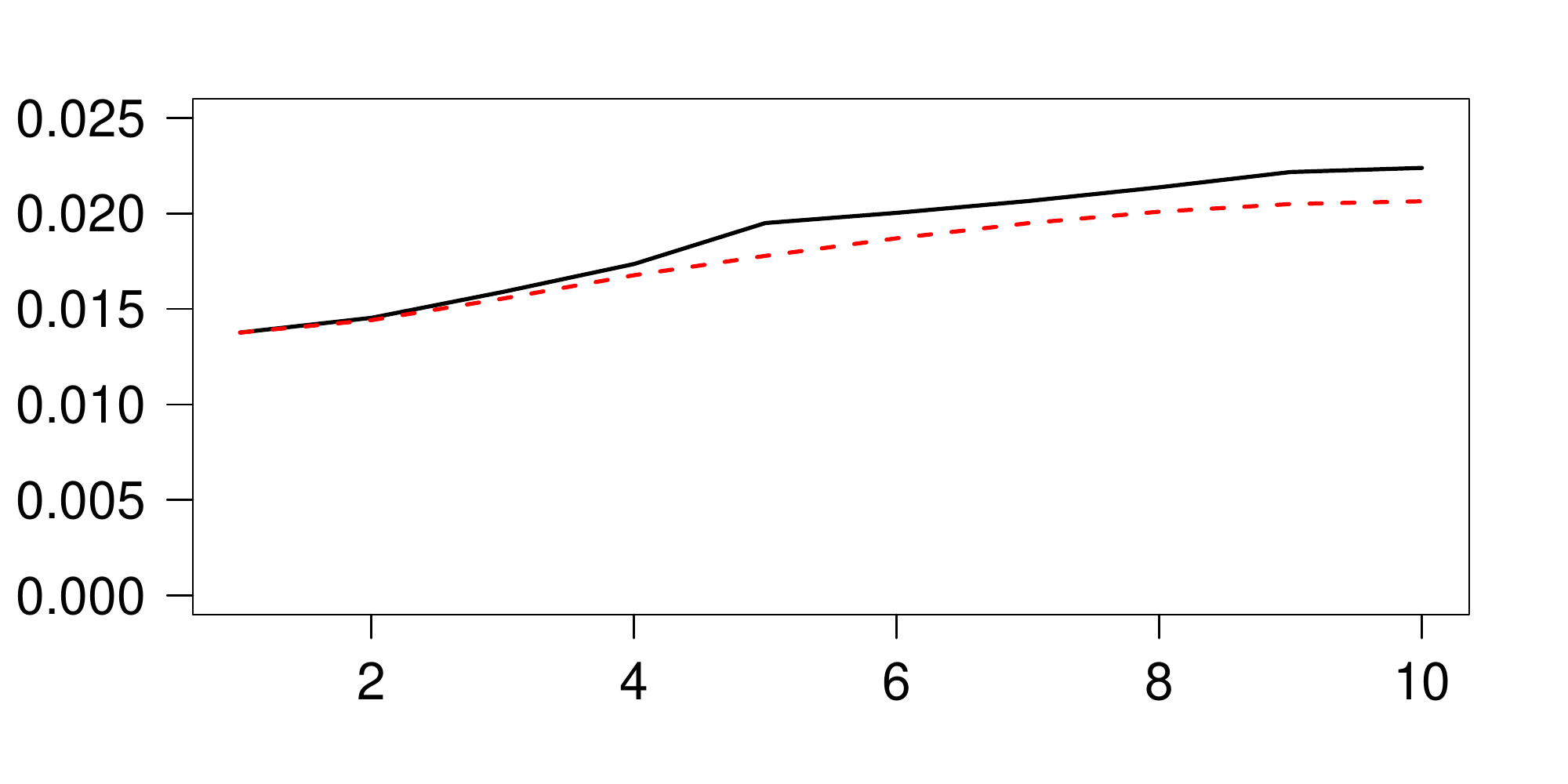}
\captionsetup{singlelinecheck=false, margin = 1.5 cm}
\caption{\small Linear interpolated annual forward inflation rates $F_I(0,T_{2(k-1)},T_{2k})$ (black) and its approximation $\I(t,T_{2k})/\I(t,T_{2(k-1)})-1$ (dashed red) for maturities from 1 to 10 years.} 
\label{fig:forwardinflation}
\end{figure}
\begin{figure}
\centering
\includegraphics[width=0.55 \textwidth]{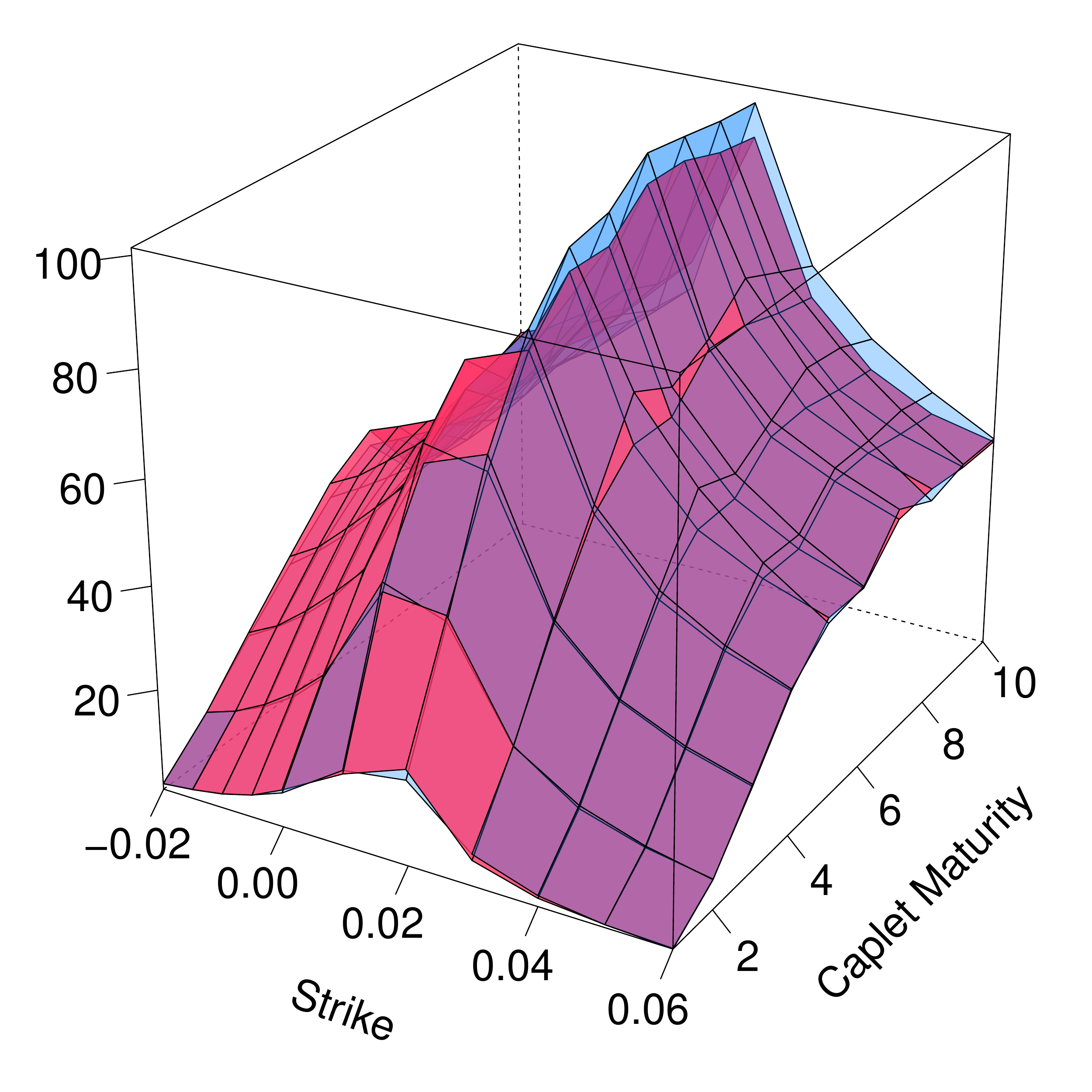}
\captionsetup{singlelinecheck=false, margin = 1.5 cm}
\caption{\small Market and model caplet/floorlet prices in basis points for annual forward inflation with strikes $-2\%$ to $6\%$ and maturities from 1 to 10 years. Market prices are in a transparent blue, while model prices are displayed in red. For strikes between $-2\%$ and $1\%$ prices are quoted for floorlets, for strikes between $2\%$ and $6\%$ prices are quoted for caplets. Market prices are bootstrapped from corresponding cap/floor data.}
\label{fig:inflfit}
\end{figure}
\begin{figure}
\centering
\includegraphics[width=0.55 \textwidth]{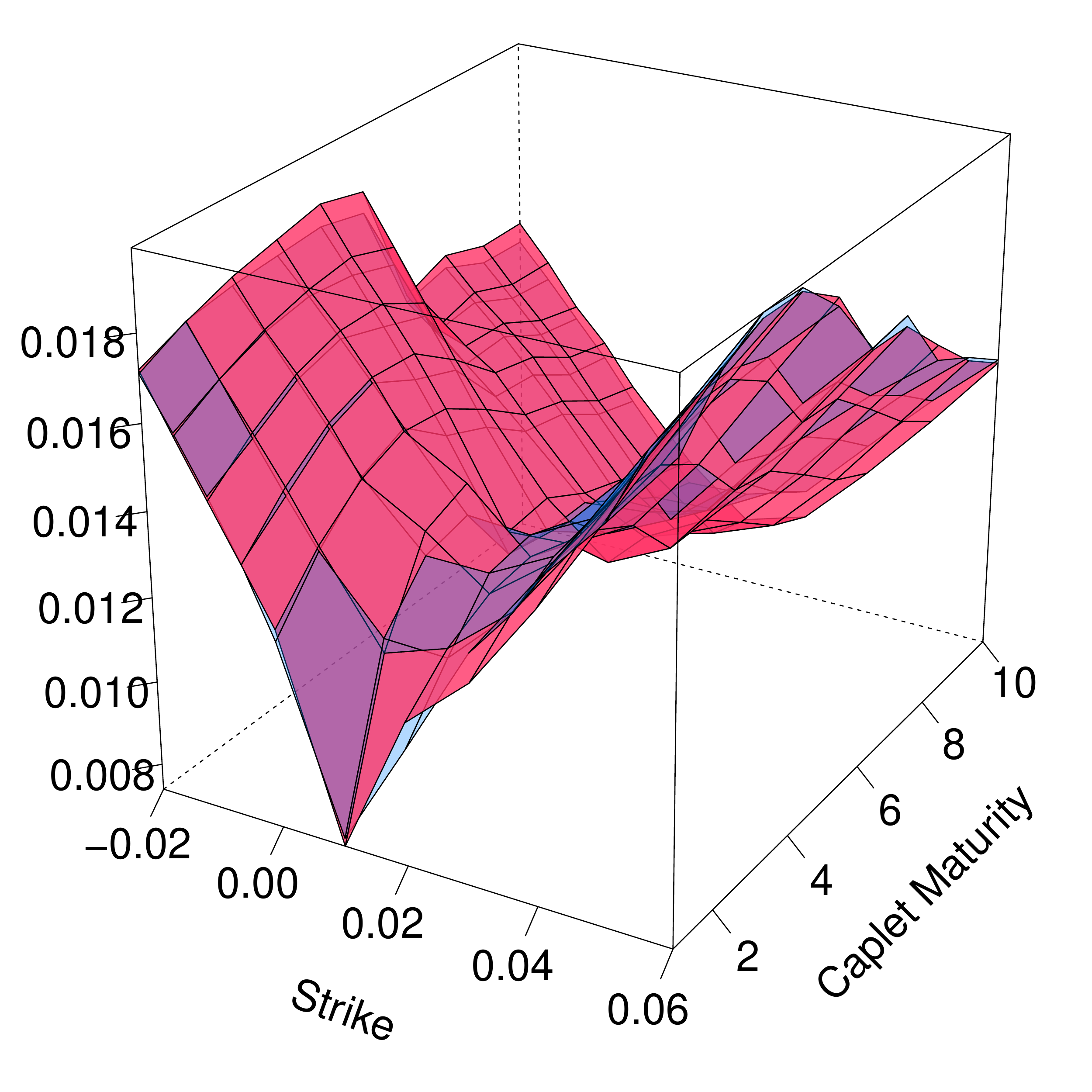}
\captionsetup{singlelinecheck=false, margin = 1.5 cm}
\caption{\small Market and model implied volatilities for annual forward inflation with strikes $-2\%$ to $6\%$ and maturities from 1 to 10 years. Market volatilities are in a transparent blue, while model volatilities are displayed in red.
}
\label{fig:inflfitvol}
\end{figure}

We would also like to display the fit in terms of (shifted lognormal) implied volatilities. Typically one only has quotes on ZCIIS rates,  which do not directly translate to forward inflation rates. However, annual forward inflation rates can be approximated by $F_I(t,T_{k-j},T_k) \approx \I(t,T_k) / \I(t,T_{k-j})-1$ (see figure \ref{fig:forwardinflation}). It is market practice to use this as forward value in the shifted Black formula to calculate approximate market volatilities. The resulting fit in terms of shifted implied volatilities is then displayed in figure \ref{fig:inflfitvol}. This shows that implied volatilities are closely reproduced by the model across all maturities and smiles and that the model is very capable in fitting the observed market data.

\subsection*{Conclusion}
We have introduced a highly tractable inflation market model, where we are able to derive analytical formulas for both types of inflation-indexed swaps. Furthermore inflation caps and floors as well as CPI caps and floors can be calculated with a one-dimensional Fourier inversion formula. Hence prices for liquidly traded inflation derivatives can be calculated quickly and accurately. Additionally the proposed model is able to price classical interest rate derivatives like caps and floors. Using these formulas we are able to calibrate the model to market data. The calibration example shows that the model can be calibrated to inflation market data very accurately.

\begin{appendix} 
\section{Affine processes} \label{sec:affineprocess}
Let  $X = (X_t)_{0 \leq t \leq T}$ be a homogeneous Markov process with values in $D = \R^m_{\geq 0} \times \R^n$ on a measurable space $(\Omega,\mathcal{A})$ with filtration  $(\F_t)_{0 \leq t \leq T}$, with regards to which $X$ is adapted. Denote by $\PM^x$ and $\EV[x]{\cdot}$ the corresponding probability and expectation when $X_0 = x$. 
X is said to be an affine process, if its characteristic function has the form
\begin{equation*} 
\EV[x]{ \e^{u\T X_{t}}}  = \ex{\phi_{t}(u) + \psi_{t}(u)\T x}, \quad u \in \i \R^d, x \in D,
\end{equation*}
where $\phi: [0,T]\times \i \R^d  \rightarrow \C$ and $\psi: [0,T] \times \i \R^d  \rightarrow \C^d$ and $\i \R^d = \{u \in \C^d: \mathrm{Re}(u) = 0 \}.$ 
By homogeneity and the Markov property the conditional characteristic function satisfies
\begin{equation}  \label{eq:affineMomentGen}
 \EV[x]{e^{u\T X_{t}} \vert \F_s} = \ex{\phi_{t-s}(u) + \psi_{t-s}(u)\T X_s}. 
\end{equation}
Accordingly affine processes can also be defined for inhomogeneous Markov processes (see \citet{FI05}). In this case the affine property reads
\begin{equation*} \label{eq:affineMomentGeninhom}
\EV[x]{ \e^{u\T X_t} \vert \F_s} = \ex{\phi_{s,t}(u) + \psi_{s,t}(u)\T X_s}, \quad u \in \i \R^d, x \in D, 
\end{equation*}
with $\phi_{s,t}: \i \R^d  \rightarrow \C$ and $\psi_{s,t}: \i \R^d  \rightarrow \C^d$ for $0 \leq s \leq t$.

$X$ is called an analytic affine process, if $X$ is stochastically continuous and the interior of the set\footnote{$\mathcal{V}$ can be described as the (convex) set, where the extended moment generating function of $X_t$ is defined for all times $t$ and all starting values $x$. By Lemma 4.2 in \citet{KM11} the set $\mathcal{V}$ is in fact equal to the seemingly smaller set 
$\left \{ u \in \C^d:   \exists x \in \mathrm{int}(D): \EV[x]{ \e^{\mathrm{Re}(u)\T X_T}} < \infty \right \}.$}
\begin{align}
\mathcal{V} & := \left \{ u \in \C^d: \sup_{0 \leq s \leq T}  \EV[x]{ \e^{\mathrm{Re}(u)\T X_s}} < \infty \quad  \forall x \in D \right \}, \label{eq:momset}
\end{align}
contains 0
. In this case the functions $\phi$ and $\psi$ have continuous extensions to $\mathcal{V}$, which are analytic in the interior, such that \eqref{eq:affineMomentGen} holds for all $u \in \mathcal{V}$ (see \citet{KR08}).

The class of affine processes includes Brownian motion and more generally all Lévy processes. Since Lévy processes have stationary independent increments, it follows that $\psi_t(u) = u$, while $\phi_t(u) = t \kappa(u)$, where $\kappa$ is the cumulant generating function of the Lévy process. Ornstein-Uhlenbeck processes are further important examples of affine processes. The affine processes used in this work are described at the end of this section. 

The standard reference for affine processes is \citet{DFS03}. There they give a characterization of affine processes, where $\phi$ and $\psi$ are specified as solutions of a system of differential equations
. 
To motivate this consider an affine process $X$. By the tower property for conditional expectations it holds for all $x \in D$
$$ \EV[x]{e^{u X_{t+s}}} = \EV[x]{\EV[x]{e^{u X_{t+s}}\vert \F_s} } =  \EV[x]{e^{\phi_t(u) + \psi_t(u)\T X_s} }. $$
Using equation \eqref{eq:affineMomentGen} it follows that $\phi$ and $\psi$ satisfy the so-called semi-flow equations
\begin{equation}
\begin{aligned}
\phi_{t+s}(u) & = \phi_t(u) + \phi_s(\psi_t(u)),  \qquad & \phi_0(u) & = 0, \\
\psi_{t+s}(u) & = \psi_s(\psi_t(u)),  & \psi_0(u)  & = u.
\end{aligned} \label{eq:semiflow}
\end{equation}
For a stochastically continuous affine process $X$ it was shown in \citet{KST11} that the functions 
\begin{equation*}
F(u) := \left. \frac{\partial}{\partial t} \phi_t(u) \right \vert_{t=0^+}, \qquad R(u) := \left. \frac{\partial}{\partial t} \psi_t(u) \right \vert_{t=0^+}
\end{equation*}
exist\footnote{This was also shown for affine processes with general state spaces in \citet{KST11b} and \citet{CT13}.}.
Rewriting \eqref{eq:semiflow} in terms of difference quotients and letting  $s \rightarrow 0$ we get that $\phi$ and $\psi$ satisfy generalized Ricatti equations
\begin{equation} \label{eq:ricatti}
\begin{aligned}
 \frac{\partial}{\partial t} \phi_t(u) & = F(\psi_t(u)), \qquad \phi_0(u) = 0, \\
 \frac{\partial}{\partial t} \psi_t(u) & = R(\psi_t(u)), \qquad \phi_0(u) = u. 
\end{aligned}
\end{equation}
The functions $F$ and $R$ have a specific form of Levy-Khintchine type as first described in \citet{DFS03}. There it is also shown that for every $F$ and $R$ of this form \eqref{eq:ricatti} has a unique solution. Specifying the functions $F$ and $R$ is an alternative way to specify an affine process. \citet{KM11} give conditions on $F$ and $R$ under which a solution of \eqref{eq:ricatti} defines an analytic affine process. Note that in order to evaluate $\phi$ and $\psi$ one would like to have closed form solutions to the system \eqref{eq:ricatti}, which in general is not the case.

Coupling independent affine processes is very tractable. For two independent affine processes $X$ and $Y$ and all starting values $x, y$ one obtains
\begin{equation} \label{eq:affinecomb}
\EV[(x,y)]{\e^{(u_X, u_Y) \cdot (X_t,Y_t)}} = \EV[(x,y)]{\e^{u_X\T X_t} \e^{u_Y\T Y_t} } = \EV[(x,y)]{\e^{u_X \cdot X_t} } \EV[(x,y)]{\e^{u_Y \cdot Y_t} }.\end{equation}
Hence $(X,Y)$ is an affine process with
\begin{equation} \label{eq:affinecombphipsi}
\begin{aligned}
\phi^{(X,Y)}_t(u_X,u_Y) & = \phi_t^X (u_X)  + \phi_t^Y (u_Y),  \\
\psi^{(X,Y)}_t(u_X,u_Y) & = (\psi_t^X(u_X), \psi_t^Y(u_Y)) .
\end{aligned}
\end{equation}
This fact together with the following Lemma is used in section \ref{sec:numeric}.
\begin{lemma} \label{lem:ufitting}
Let $X$ be an analytic affine processes comprised of $m+n$ independent affine processes, where the first $m$ are nonnegative. For $(u_1, \dots, u_k, v, u_{k+1}, \dots, u_n) \in \mathrm{int}(\mathcal{V}) \cap \R^{m+n}$ define the function
$$f^k(v) := \EV[x]{\e^{(u_1,\dots,u_{k-1},v,u_{k+1},\dots,u_{m+n}) \cdot X_t}}.$$
Then $f^k$ is monotonically increasing if $k \leq m$ and $f^k$ is convex for all $k$. 
\end{lemma}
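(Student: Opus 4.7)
The plan is to reduce the statement to a one-dimensional assertion about the moment generating function of a single affine component and then invoke elementary properties of MGFs. The key tool is the factorization \eqref{eq:affinecomb}, which for independent components of $X$ gives
$$
f^k(v) \;=\; \E^x\!\left[\e^{v X_t^k}\right] \cdot \prod_{i\neq k} \E^x\!\left[\e^{u_i X_t^i}\right].
$$
The assumption that the vector $(u_1,\dots,u_{k-1},v,u_{k+1},\dots,u_{m+n})$ lies in $\mathrm{int}(\mathcal{V})$ guarantees, by the definition \eqref{eq:momset}, that every factor is finite and strictly positive, so the product over $i\neq k$ is a positive constant $C=C(t,x,u_{-k})$. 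Thus it suffices to prove the two claims for the single-variable function $g(v):=\E^x[\e^{v X_t^k}]$.

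For convexity, I would give two equivalent arguments and keep whichever is shorter in the final write-up. The Jensen/Hölder route: for $\lambda\in[0,1]$ and $v_1,v_2$ in an open interval of allowed values, $\e^{(\lambda v_1+(1-\lambda)v_2)X_t^k} = (\e^{v_1 X_t^k})^\lambda (\e^{v_2 X_t^k})^{1-\lambda}$, and Young's inequality (or weighted AM--GM) bounds this pointwise by $\lambda \e^{v_1 X_t^k}+(1-\lambda)\e^{v_2 X_t^k}$; taking expectations yields convexity of $g$. Alternatively, since we are in $\mathrm{int}(\mathcal{V})$ the map $v\mapsto g(v)$ is analytic (this is the analytic affine property recalled in the paragraph around \eqref{eq:momset}), so differentiation under the expectation is legitimate and
$$
g''(v) \;=\; \E^x\!\left[(X_t^k)^2 \e^{v X_t^k}\right] \;\geq\; 0.
$$
Multiplying by $C>0$ gives convexity of $f^k$ for every $k$.

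For monotonicity in the case $k\leq m$, the $k$-th component is nonnegative by assumption, so $X_t^k\geq 0$ almost surely. Then $v\mapsto \e^{v X_t^k}$ is pointwise non-decreasing, and again differentiation under the expectation (justified as above) gives
$$
g'(v) \;=\; \E^x\!\left[X_t^k \e^{v X_t^k}\right] \;\geq\; 0,
$$
so $f^k=C\cdot g$ is non-decreasing.

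The only non-routine point is the interchange of differentiation and expectation; this is the step I would watch most carefully. It is handled by the analyticity of $g$ on $\mathrm{int}(\mathcal{V})\cap\R$ guaranteed by the analytic affine assumption, which supplies the local dominating function needed for differentiation under the integral sign. With that in place, both claims reduce to the elementary one-dimensional facts above.
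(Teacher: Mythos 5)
Your proof is correct and follows essentially the same route as the paper: the paper's (one-line) argument is exactly that the integrand is pointwise convex in $v$, and pointwise nondecreasing when $k\leq m$ because the $k$-th component is nonnegative, so both properties survive taking the expectation. Your factorization via independence and the differentiation-under-the-integral alternative are harmless extra detail but not a different method.
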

\begin{proof}
For each $x \in D$ the term inside the expectation is convex in $v$ and monotonically increasing in $v$ if $k \leq m$. This then also holds after taking the expectation. 
\end{proof}

The last part of this section describes the affine processes used in this paper. One classical example is the CIR process, which is the unique solution of
\begin{equation} \label{eq:CIR}
\dd{X_t} = - \lambda (X_t - \theta) \dd{t} + 2\eta \sqrt{X_t} \dd{W_t}, \qquad X_0 = x.
\end{equation}
For this process the functions $\phi$ and $\psi$ are defined for $\mathrm{Re}(u) < \frac{\lambda}{2 \eta^2} (1-\e^{-\lambda t})^{-1}$,
\begin{equation*}
\begin{aligned}
\phi_t(u) & = - \frac{\lambda \theta}{2 \eta^2} \logn{1- \frac{2\eta^2}{\lambda} (1-\e^{-\lambda t})  u}, \\
\psi_t(u) & = \frac{\e^{-\lambda t} u}{1- \frac{2\eta^2}{\lambda} (1-\e^{-\lambda t})  u}. 
\end{aligned}
\end{equation*}
The CIR process almost surely stays nonnegative. It is strictly positive if $\frac{\lambda \theta}{2} > \eta^2$. One can add jumps to this process by adding the differential of a compound Poisson process $L_t$  to the dynamics of $X$. 
\begin{equation} \label{eq:CIRGamOU}
\dd{X_t} = - \lambda (X_t - \theta) \dd{t} + 2\eta \sqrt{X_t} \dd{W_t} + \dd{L_t}, \qquad X_0 = x.
\end{equation}
If $L_t$ has exponentially distributed jumps with expectation values $\frac{1}{\alpha}$ arriving at rate $\lambda \beta$ the functions $\phi$ and $\psi$ are (see \citet{GP13})
\begin{equation*}
\begin{aligned}
\phi_t(u) & = - \frac{\lambda \theta}{2 \eta^2} \logn{1- \frac{2\eta^2}{\lambda} (1-\e^{-\lambda t})  u} - \frac{\lambda \beta}{\lambda - 2 \eta^2 \alpha} \logn{\frac{\alpha-u}{\alpha-u\left(\e^{-\lambda t}+(1-\e^{-\lambda t})\frac{1}{\lambda} 2 \eta^2 \alpha\right)}}, \\
\psi_t(u) & = \frac{\e^{-\lambda t} u}{1- \frac{2\eta^2}{\lambda} (1-\e^{-\lambda t})  u},
\end{aligned}
\end{equation*}
where $$\mathrm{Re}(u) < \min\left\{ \frac{\lambda}{2 \eta^2} (1-\e^{-\lambda t})^{-1}, \alpha \left(\e^{-\lambda t}+(1-\e^{-\lambda t})\frac{1}{\lambda} 2 \eta^2 \alpha\right)^{-1} , \alpha \right\}.$$ Since $L_t$ has only positive jumps this process also stays nonnegative. 
As a third example consider the real-valued affine process defined by
\begin{equation} \label{eq:DGamOUBM}
\dd{X_t} = - \lambda (X_t - \theta) \dd{t} + \sigma \dd{W_t} + \dd{\tilde{L}_t}, \qquad X_0 = x,
\end{equation}
where $\tilde{L}_t$ is a compound Poisson process with positive jumps with mean $\frac{1}{\alpha^+}$ arriving at rate $\lambda \beta^+$ and negative jumps with mean $\frac{1}{\alpha^-}$ arriving at rate $\lambda \beta^-$. The functions $\phi$ and $\psi$ in this case read  (see \citet{MW14})
\begin{align*}
\phi_t(u) & = \frac{\sigma^2 u^2}{4 \lambda} (1-\e^{-2 \lambda t}) + \theta u  (1-\e^{-\lambda t})  + \frac{\beta^+ + \beta^-}{2} \logn{\frac{(\alpha^+-\e^{-\lambda t}u)(\alpha^-+\e^{-\lambda t}u)}{(\alpha^+-u)(\alpha^-+u)}} \\
& + \frac{\beta^+ - \beta^-}{2 } \logn{\frac{(\alpha^+-\e^{-\lambda t}u)(\alpha^-+u)}{(\alpha^+-u)(\alpha^-+\e^{-\lambda t}u)}}, \\
\psi_t(u) & = \e^{-\lambda t} u,
\end{align*}
for $- \alpha^- < \mathrm{Re}(u) < \alpha^+$. 
\end{appendix}

\setlength{\bibsep}{0.0pt}


\end{document}